\def\BibTeX{{\rm B\kern-.05em{\sc i\kern-.025em b}\kern-.08em
    T\kern-.1667em\lower.7ex\hbox{E}\kern-.125emX}}
\DeclareMathOperator{\sgn}{sgn}
\newcommand{\etal}{\textit{et~al}.\@\xspace} 
\newcommand{\exposed}{\textit{exposed}\@\xspace} 
\newcommand{\susceptible}{\textit{susceptible}\@\xspace} 
\newcommand{\infected}{\textit{infectious}\@\xspace} 
\newcommand{\removed}{\textit{removed}\@\xspace} 
\newtheorem{theorem}{Theorem}
\newtheorem{lemma}{Lemma}
\newtheorem{definition}{Definition}
\begin{document}

\title{Models for digitally contact-traced epidemics}
\author{Chiara Boldrini, Andrea Passarella, and Marco Conti
\thanks{All authors are with CNR-IIT, Via G. Moruzzi 1, 56124, Pisa (e-mail: first.last@iit.cnr.it).}
\thanks{This work was partially funded by the SoBigData++, HumaneAI-Net, and SAI projects. The SoBigData++ and HumaneAI-Net projects have received funding from the European Union's Horizon 2020 research and innovation programme under grant agreements No 871042 and No 952026, respectively. The SAI project is supported by the CHIST-ERA grant CHIST-ERA-19-XAI-010, funded by by MUR (grant No. not yet available), FWF (grant No. I 5205), EPSRC (grant No. EP/V055712/1), NCN (grant No. 2020/02/Y/ST6/00064), ETAg (grant No. SLTAT21096), BNSF (grant No. KP-06-DOO2/5).}
\thanks{This work has been published in  IEEE Access, vol. 10, pp. 106180-106190, 2022, doi: 10.1109/ACCESS.2022.3211425 under a Creative Commons License.}
}



\maketitle

\begin{abstract}
Contacts between people are the main drivers of contagious respiratory infections. For this reason, limiting and tracking contacts is a key strategy for controlling the COVID-19 epidemic. Digital contact tracing has been proposed as an automated solution to scale up traditional contact tracing. However, the required penetration of contact tracing apps within a population to achieve a desired target in controlling the epidemic is currently under discussion within the research community. In order to understand the effects of digital contact tracing, several mathematical models have been studied. In this article, we propose a novel compartmental SEIR model with which it is possible, differently from the models in the related literature, to derive closed-form conditions regarding the control of the epidemic. These conditions are a function of the penetration of contact tracing applications and testing efficiency. Closed-form conditions are crucial for the understandability of models, and thus for decision makers (including digital contact tracing designers) to correctly assess the dependencies within the epidemic. Feeding COVID-19 data to our model, we find that digital contact tracing alone can rarely tame the epidemic: for unrestrained COVID-19, this would require a testing turnaround of around 1 day and app uptake above 80\% of the population, which are very difficult to achieve in practice. However, digital contact tracing can still be effective if complemented with other mitigation strategies, such as social distancing and mask-wearing. 
\end{abstract}

\begin{IEEEkeywords}
COVID-19, analytical models, digital contact tracing, testing efficiency, SEIR model, closed-form solutions, app uptake, epidemic controllability
\end{IEEEkeywords}



\section{Introduction}
\label{sec:intro}

\IEEEPARstart{S}{ince} April 2020, the WHO has been recommending two main and complementary strategies to curb the COVID-19 epidemic: social distancing on the one end, \emph{test, trace, treat} (the famous 3 T's) on the other. As for any respiratory viral infection, the sooner we are able to ``remove'' contagious people from interacting with others, the sooner the epidemic will be restrained. Indeed, if, on average, each infected person infects less than one susceptible person, rather than, for example, two or three, the epidemic will die naturally~\cite{Fraser2004}. Of course, for this to be effective, all three T's must be carried out swiftly. Contact tracing without testing is impossible: first, you have to know that a person is potentially contagious before being able to track down their contacts. Similarly, tracing must be completed as quickly and as thoroughly as possible: the longer it takes to identify past contacts, the longer the time a potentially contagious person spends unknowingly infecting other people. Then, contagious people must be immediately isolated, and treated if necessary. 

Contact tracing can be performed manually or digitally. Manual contact tracing involves reconstructing the history of past contacts with the infected person in the days before being detected as contagious. This is typically done through interviews with the infected person. Manual contact tracing suffers from two main problems: i) it is labor-intensive, hence it struggles to keep up when the number of daily new cases is high, and ii) the contagious person might not be able to recall precisely their past contacts (simply because they forget some or because some chance contacts with strangers are not noticed in the first place). 
Digital contact tracing, performed by means of smartphone apps that -- typically via Bluetooth -- automatically detect and register contacts, have the potential to overcome the two limitations of manual contact tracing described above. The research community has already identified convincing solutions that provide reasonable trade-offs between privacy and tracing accuracy~\cite{Cho2020,Raskar2020,Nanni2020,Oliver2020}. Specifically, decentralized Bluetooth-based contact tracing has emerged as the solution of choice and privacy-preserving apps based on this approach have been deployed\footnote{For an extensive list: \url{https://en.wikipedia.org/wiki/COVID-19_apps}.} in many countries~\cite{Ahmed2020,Troncoso2022}. Recent research proposals involve leveraging blockchains and IoT for a more efficient and privacy-preserving tracking~\cite{Garg2020}.
The vast majority of deployed apps leverage the Exposure Notification protocol, jointly rolled out by Apple and Google in Spring~2020. However, digital contact tracing comes with its own problems. The main one is that, for it to be effective, a significant percentage of the population must have the app installed~\cite{Ferretti2020}. 
Bumping up this percentage may not be as easy as it seems~\cite{Li2020c}. For example, people with old smartphones (typically not supporting Bluetooth Low Energy or for which an updated operating system is not available) cannot enjoy the tracking functionality. Fear of government intrusion into privacy turns off other potential participants.  

Due to the limitations discussed above, the percentage of people with an installed and fully-functioning contact tracing app will be far from 100\%. Thus, key questions are, among others: how large should this percentage be for digital contact tracing to be effective in containing the epidemic? How does this percentage depend on the contact patterns between people? How does it depend on the implementation of other mitigation measures (such as social distancing and mask-wearing)?
Digital contact tracing has yet to be properly evaluated as a public health measure through a large-scale assessment~\cite{Colizza2021}.
Thus, the answers to the above questions must then necessarily come from mathematical models and simulations. 
The network of people with the contact tracing app installed is just another instance of a mobile social network~\cite{CONTI20122}: people interact with each other socially and these interactions are mirrored in the anonymous data collected from the contact tracking app. Thus, by taking advantage of the properties of this mobile social network, we will investigate the above questions. 

Ferretti~\etal~\cite{Ferretti2020} adapted a model introduced by Fraser~\etal~\cite{Fraser2004} (and based on the popular Von Foerster equation) in order to tackle the same research problem. However, the model in~\cite{Ferretti2020} can only be solved numerically, and hence it is unable to yield a closed-form condition under which the epidemic can be controlled based on the characteristics of the digital contact tracing in place. In this article, to complement the model in~\cite{Ferretti2020}, our goal is to propose a deterministic compartmental model for digital contact tracing that provides closed-form conditions for the control of an epidemic. \emph{Closed-form control conditions are crucial for the understandability of models and are instrumental for decision makers and computer scientists working on digital contact tracing.}

The contributions of this paper are the following:
\begin{itemize}
    \item To complement the model in~\cite{Ferretti2020}, we propose a deterministic compartmental model for digital contact tracing. The advantage of this modeling approach is that closed-form solutions can be obtained, and hence analytical conditions on the control of the epidemic can be derived. Vice versa, the Von Foerster equation on which~\cite{Ferretti2020} is based, is more accurate than simple compartmental models, but can only be solved numerically. 
    \item Alongside the compartmental model, we also introduce a standalone model that captures how the testing delay affects the efficacy of the detection of infected people, depending on the duration of the latent window (the period during which an infected person is not yet contagious) and the contagious window (during which the infected person is contagious but has yet to develop symptoms). This model is general, and can be solved in closed form for some common distributions describing these time intervals.
    \item We apply the model to realistic COVID-19 epidemic scenarios, showing that, even with high penetration of digital contact tracing, the control of the epidemic is extremely difficult without additional mitigation strategies.
\end{itemize}


The rest of the paper is organized as follows. In Sections~\ref{sec:covid_primer}-\ref{sec:models}, we overview the main results in the related literature regarding COVID-19 modeling and we summarise the properties of the disease itself that are important from the modeling standpoint. Our deterministic compartmental model is presented in Section~\ref{sec:seir}, together with the model on the efficacy of detection of infected persons. The proposed model is then applied to a set of realistic epidemiological scenarios in Section~\ref{sec:evaluation}.
Finally, Section~\ref{sec:conclusion} concludes the paper.

\section{A brief overview on COVID-19}
\label{sec:covid_primer}

From the modeling standpoint, a crucial aspect is to understand when infected people become contagious. For any viral disease, the typical timeline is the following. Following contact with a contagious person, an individual may become infected. However, they do not become contagious immediately: there is a \emph{latent period} during which the person is infected but not yet contagious (i.e., the virus is replicating but its quantity is not yet enough to infect another person). Another important stage is the \emph{incubation period}, which goes from the infection time to the time when the person starts developing symptoms. The latent period may be shorter than the incubation period: this means that an infected person becomes contagious before developing symptoms. Clearly, this makes controlling the spread of the disease harder, since the contagious person who has not developed any symptoms is not aware of their contagiousness. Despite being the subject of hot debates in the initial phases of the COVID-19 epidemic, it is now clear that asymptomatic and pre-symptomatic carriers play a major role in the spread of the SARS-CoV-2 virus~\cite{Bohmer2020,Hu2020,Qian2020,Rothe2020,Wang2020,Yu2020,Bai2020,Wei2020,Wolfel2020,Kimball2020,Li2020a,Li2020b,Sutton2020,He2020,Luo2020}.

SARS-CoV-2 is an airborne\footnote{We use the layman's definition of airborne here. Technical use implies only transmission through aerosol.} virus, i.e., it travels through the air. The typical transmission pathway occurs when a contagious person talks, sneezes, or coughs, producing infectious droplets that are inhaled by people in close proximity. Less frequently, these droplets may fall on the surfaces in close proximity and then contribute to transmitting the disease when the contaminated surface is touched by a susceptible person and then this person touches his/her face (eyes, mouth, etc.). The latter transmission pathway is known as environmental transmission. It is not known to play a major role in the COVID-19 epidemic\footnote{https://www.ecdc.europa.eu/en/covid-19/latest-evidence/transmission},
hence we will not consider it in the modeling. A third transmission pathway is that of aerosol~\cite{Morawska2020,chagla2020airborne-transmission,Klompas2020,Bazant2021}:
 when a contagious person talks, sneezes, or coughs they also produce some smaller droplets (known as aerosol) that evaporate faster than they fall on the ground~\cite{Bourouiba2020}.
This means that with aerosol transmission, the dry virus lingers in the air for a considerable time and travels long distances. The bad news is that common face masks (such as surgical and cloth ones) are not well equipped to contain such small droplets. Thus, aerosol transmission is much more challenging than droplet transmission, which can be easily contained by relying on widespread social distancing and lower-grade mask-wearing.  However, model-wise, they can both be captured by appropriately tuning the probability of infection upon contact.

\section{Models of epidemics}
\label{sec:models}

There are two main modeling approaches in the related literature: mathematical models and agent-based models. 
Mathematical models of epidemics typically lay out a system of Ordinary Differential Equations (ODE)/Partial Differential Equations (PDE) that describes how the number of susceptibles, infected, etc., varies over time. 
Sometimes these systems can be solved in closed form and provide very useful trends describing what-if scenarios. 
Otherwise, numerical solutions can be obtained. 
Due to their nature, these models are based on several simplifying initial assumptions to make the mathematical representation of the phenomenon tractable. 
On the opposite side of the modeling spectrum, there are agent-based models. 
Agent-based models are computational models where agents (corresponding to people) interact, in simulation, according to some predefined rules, which can be arbitrarily complex~\cite{Hunter2018,Venkatramanan2018,Hackl2019,Parker2011,Bonabeau2002,Grefenstette2013,Liu2015}. 
They are conceptually very similar to the models used in transportation simulation. They recreate synthetic populations in terms of demographics, traffic flows, etc., and then an epidemic is simulated. Recently, machine learning approaches have gained popularity. Tomy~\etal~\cite{Tomy2022} exploit Graph Neural Networks (GNN) to solve the problem of inferring the state of the entire population by observing just a few individuals. GNN have also been used to forecast pandemic evolution~\cite{Fritz2022,Gao2021,Kapoor2020}, for the temporal reconstruction of epidemic spreading~\cite{Cutura2021}, and for identifying patient-zero in an epidemic~\cite{Shah2020}. More traditional approaches have also been used, e.g., in~\cite{Dash2021}, for epidemic forecasting.
Since neither agent-based nor machine learning approaches are the focus of this work, we will not discuss them further.

By far, the most used mathematical model is the classical SIR model and its many variations~\cite{Brauer2012}. In the basic version, people are divided into three compartments (denoted with S, I, R, hence the name of the model). In S, people are susceptible to the disease, i.e., they can become infected upon contact with an infectious person. Infectious people are in compartment I. After a certain time spent in compartment I, infected people recover and move to compartment R. Transitions between compartments are then modeled as follows:

 \begin{align*}
  \vspace{-10pt}
 \frac{dS}{dt} &= - \beta I \\ 
 \frac{dI}{dt} &=  \beta I - \gamma I \\
 \frac{dR}{dt} &= \gamma I.
 \vspace{-10pt}
 \end{align*}

Parameters $\beta$ and $\gamma$ describe the rate at which susceptibles become infected and infected recover. The SIR model can be described by a system of ordinary differential equations that can be solved in a closed form. This representation of an epidemic is referred to as \emph{deterministic}, because the above equations are an approximation, holding for very large populations, of the stochastic version\footnote{Stochastic SIR models have been extensively studied in the complex system community, as they are amenable to capture the effect of network topology. At the beginning of a disease outbreak, individual variability (such as whether a node is a ``hub'' in the contact network) plays a major role in determining whether an epidemic will occur or not, and deterministic models (which treat all nodes as equal) are not an appropriate choice.  
In fact, deterministic models assume a fully-mixed population, meaning that an infected individual has the same probability of infecting any susceptible node in the network. This assumption is needed to write down the ODE system, but it is only considered reasonable once an epidemic has started, and several nodes are already contributing to it. Our model, similarly to other ODE-base models, should be used when a disease has already achieved its epidemic phase, in which accurately capturing the outbreak stage is no longer important while having tools for studying the controllability becomes essential. Please refer to~\cite{Brauer2012}, and reference therein, for a detailed discussion.} of the SIR model~\cite{Brauer2012}.  
The simple SIR model has been extended in several directions, adding the \exposed compartment (where people infected but not yet contagious reside), which we also use in this work, and many more (see \cite{Brauer2012} for a general discussion and~\cite{Giordano2020ModellingItaly.} for an application to COVID-19). It has also been used to study the two-pathogen case~\cite{Newman2005}, when two pathogens insist on the same population (we do not consider this case in our work). 

The deterministic compartmental models discussed above are based on the simple assumption that the time spent in each compartment can be reasonably approximated with an exponential random variable (the Markovian assumption). When this is not the case, other types of models must be considered. An important class of non-Markovian models are those based on the McKendrick-VonFoerster equation, which incorporates a so-called age structure to the model~\cite{Brauer2012}. Originally, this model was designed to capture births and deaths in the dynamics of population growth in cellular biology: offspring are generated at a young age, and death occurs typically at an old age. Hence, keeping track of the population age over time was essential to predict the evolution of the population size. When applied to epidemiology, \emph{age} is seen from the point of view of infection, i.e., it corresponds to the time since the individual became infected. And the \emph{birth rate} at infection age $\tau$ becomes the rate at which  a person infected $\tau$ days ago produces offspring, i.e., newly infected people. Thus, the infection rate is no longer constant over time and depends on the current age profile of the population.

Finally, a related active area of study for COVID-19 is the correct estimation of the parameters that describe the dynamics of infection~\cite{Riou2020,ferguson2020report,Linton2020EpidemiologicalData,Li2020,Dorigatti,Arenas2020}. This is important both for purely mathematical models and for agent-based models, because a correct estimation of the epidemic parameters allows researchers to correctly set up their assumptions and simulations.

\section{Factoring in digital contact tracing}
\label{sec:seir}

The McKendrick-VonFoerster model introduced in Sec.~\ref{sec:models} has been used in~\cite{Ferretti2020}, a seminal paper dedicated to evaluating the efficacy of contact tracing for COVID-19. The model by Ferretti \etal~\cite{Ferretti2020} is based on the one proposed in~\cite{Fraser2004}, with parameter values customized to the COVID-19 setting. 
This model is the de-facto reference for digital contact-tracing effectiveness estimation and the vast majority of forecasts, coming both from  within the scientific communities and news outlets, have been based on its results. 
By its own nature, the McKendrick-VonFoerster model can only be solved numerically. Hence, it is unable to yield a closed-form condition under which the epidemic can be controlled based on the characteristics of the digital contact tracing in place. 
Thus, in this work, we complement the results of Ferretti \etal~\cite{Ferretti2020} showing how a simpler model, whose control condition is solvable in closed form, can be obtained.  The notation we use in the paper is summarized in Table~\ref{tab:notation} at the end of the section.
 

\begin{figure}[t]
\begin{center}
\includegraphics[scale=0.45]{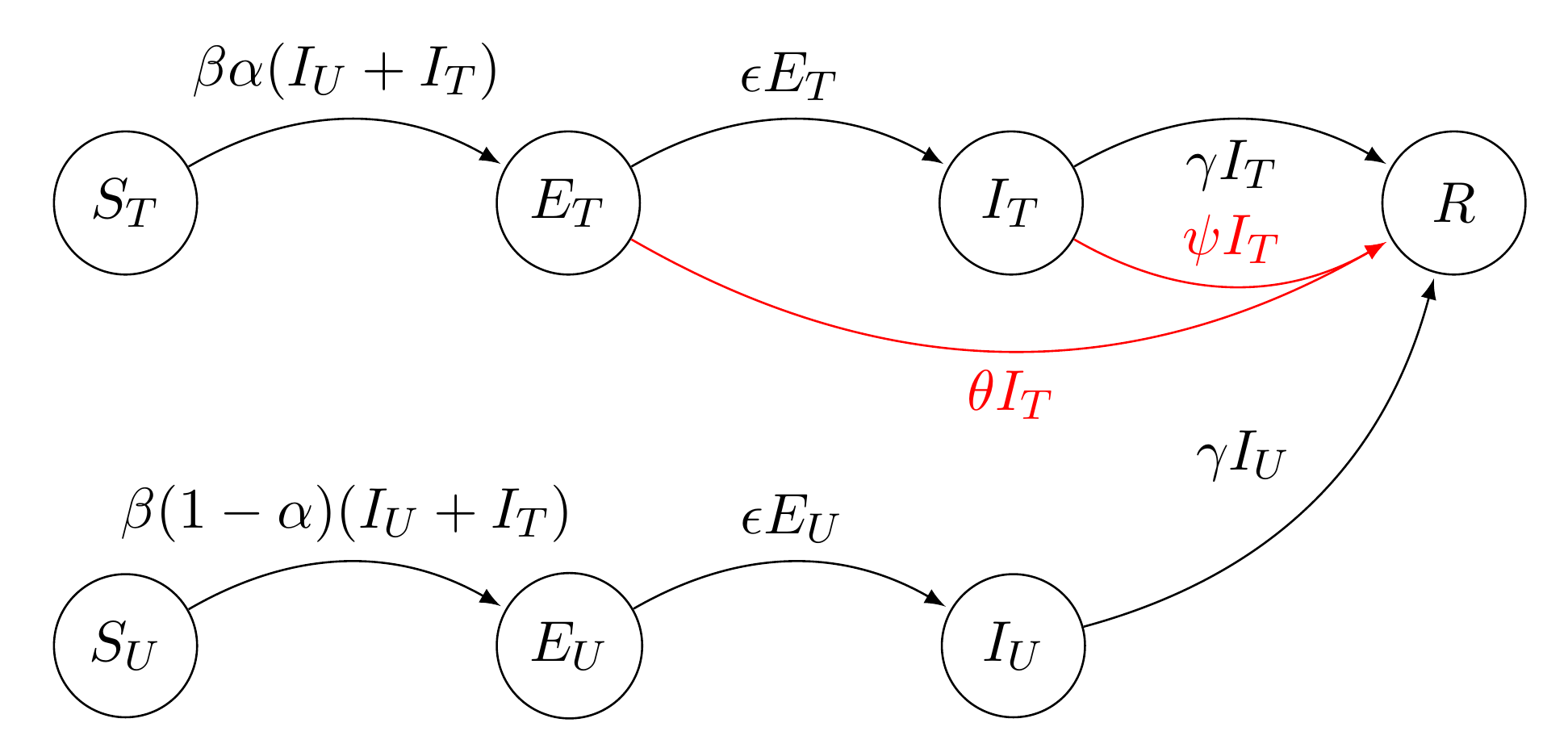}
\caption{How people move across SEIR compartments. In red are the arcs associated with containment measures: quarantine of exposed people plus isolation of those infected and contagious. $\alpha$ denotes the fraction of the digitally tracked population.} \vspace{-20pt}
\label{fig:SEIR_model}
\end{center}
\end{figure}

We start with a simplified version of the model, presented in Figure~\ref{fig:SEIR_model}, for illustrative purposes.
As usual for deterministic compartmental models, we start with a population of constant size $N$, i.e., the sum of people in all states must add up to $N$. When looking at a large population, the short-term variation in its size is small and can be neglected. 
The goal of the model is to capture how infection spreads through the population and to assess whether the spread can be stopped or not by means of digital contact tracing and the resulting quarantine of contacts. 
We assume that the epidemic is faster than the long-term dynamics of births and deaths in the population, so we ignore the latter. 
People can be in one of four states: S (\susceptible), E (\exposed), I (\infected), R (\removed). Since people can be either tracked (with a contact tracing app) or untracked, we duplicate these states to account for this difference (thus, each state is marked with subscript T or U for tracked and untracked, respectively). We do not need to duplicate the \removed state because people in R do not contribute to the epidemic anymore. While we use the common letters S, E, I, R to denote the states, we slightly adjust the default meaning of the states to take into account asymptomatic and presymptomatic transmission. 
In this model, then, \exposed means infected but not yet contagious, while \infected means infected and contagious but with no symptoms (this includes the pre-symptomatic and the asymptomatic phase of the disease). The \removed state includes all infected (whether contagious or not) that have been isolated and/or have recovered. In this simplified model, a person is isolated either because she is infected and has been tracked by the contact tracing app or because she is contagious and has started developing symptoms. We do not include a dedicated state where people are both contagious and symptomatic because it is reasonable to assume that people with symptoms will isolate themselves (and therefore join the \removed state). 
%


The fraction of tracked people is indicated by $\alpha$, where $\alpha$ represents the percentage of the population that subscribed to the considered contact tracing app. Thus, at time $t=0$, we have $\alpha N$ people in state $S_T$ (corresponding to people that are susceptible and tracked) and $(1-\alpha) N$ people in $S_U$ (susceptible but not tracked). From the \susceptible state, people can only move to the \exposed state\footnote{Technically, also some susceptibles can be removed (i.e., asked to isolate) due to, e.g., a faulty detection from the contact tracing app or because the app detected a contact that did not generate an infection. In practice, these notified people will take a test in a matter of days and the test will come up negative, joining again the $S$ compartment. Thus, since SEIR models are based on the assumption $S \sim N$, with $N$ large, this removal can be effectively ignored.}. Recall that ``exposed'', in this case, means infected but not yet contagious. Therefore, the time spent in the \exposed state corresponds, without containment measures in place, to the latent period of the disease. However, there is a crucial difference between untracked and tracked exposed: tracked exposed will be notified by health authorities about their previous contact with a tracked contagious person and they will be isolated, i.e., they will move to the \removed state. 
The same happens for tracked infectious. 
Removed people are isolated, and hence cannot infect anyone. This is the crucial contribution of contact tracing. Below, we summarize how the transitions from each state can be modeled.

\begin{description}
\item[$S_U \rightarrow E_U$] The rate at which people leave the $S_U$ state is given by the effective contact rate $\beta$ (rate at which there is an encounter between an S and an I and this encounter generates an infection) times the number of possible encounters between people in $S_U$ and those in $I$ (we don't care whether the encounter is with a tracked or untracked infected). 
\item[$S_T \rightarrow E_T$] The same holds true for the rate at which tracked susceptibles leave $S_T$, with the appropriate change from U to T of S's subscript with respect to the previous case. 
\item[$E_U \rightarrow I_U$] Untracked exposed (people in $E_U$ state) stay there until they become contagious, and this happens at a rate $\epsilon$.
\item[$E_T \rightarrow I_T$]  Instead, \emph{tracked} exposed will either become contagious and move to state $I_T$ (this happens with the rate $\epsilon$) or be warned of having had contact with an infected and told to self-isolate (this happens with rate~$\theta$, discussed in detail in Sec.~\ref{sec:parameters_from_contact_history}). 
\item[$I_U \rightarrow R$]  Untracked contagious ($I_U$) are isolated when they begin to develop symptoms, and this happens at a rate $\gamma$. 
\item[$I_T \rightarrow R$] Instead, tracked contagious ($I_T$) can be either isolated when they start developing symptoms (similarly to the untracked case) or be informed that they have had contact with an infected and told to self-isolate (this happens with rate~$\psi$, discussed in detail in Sec.~\ref{sec:parameters_from_contact_history}).
\end{description}

The key point in being able to add the effect of contact tracing to the SEIR equations is to adequately model the red transitions in Figure~\ref{fig:SEIR_model}. The rates $\theta$ and $\psi$ capture how effective digital contact tracing is in removing infected people, and factor in testing delay as well as epidemic features (latent period, contagious period, etc.). We will discuss this aspect in Section~\ref{sec:parameters_from_contact_history} and provide a methodology to derive $\theta$ and $\psi$.  For now, let us assume that we can assign proper values to all the parameters of the model. In Theorem~\ref{theo:c1} below, we discuss how to solve the model and how to assess whether the epidemic can be controlled or not depending on the efficacy of contact tracing. The proof of the theorem can be found in Appendix~\ref{app:proof_c1}.

\begin{theorem} \label{theo:c1}
The epidemic described by the SEIR model in Figure~\ref{fig:SEIR_model} can be controlled when the following condition is true:
\begin{equation}\label{eq:c1}
\textrm{C1: } \alpha - \left(1 + \frac{\gamma}{\theta + \psi} \right) \left(1-\frac{\gamma}{\beta}\right) > 0.
\end{equation}
\end{theorem}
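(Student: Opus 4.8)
The plan is to recast ``the epidemic can be controlled'' as local asymptotic stability of the disease-free equilibrium (DFE) of the SEIR system, and then to reduce that stability question to a threshold on a basic reproduction number $R_0$ via the next-generation approach. First I would write down the ODEs for the eight compartments $S_U,S_T,E_U,E_T,I_U,I_T$ and $R$ following the transition rules of Figure~\ref{fig:SEIR_model}, with new-infection rate $\beta(I_U+I_T)$ partitioned between tracked and untracked susceptibles in proportion to $\alpha$ and $1-\alpha$. The DFE is the wholly-susceptible state $S_U=(1-\alpha)N$, $S_T=\alpha N$ with every infected compartment empty. Linearizing the four infected equations $(E_U,E_T,I_U,I_T)$ about this point, and freezing the susceptible fractions at $1-\alpha$ and $\alpha$, yields a constant-coefficient subsystem whose stability decides whether a small seed of infection grows or dies out.

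Next I would split the linearized generator into a new-infection part $F$ and a transition/removal part $V$ and compute $R_0=\rho(FV^{-1})$. The structural observation that makes this tractable is that \emph{all} new infections enter through the two exposed compartments with weights $1-\alpha$ and $\alpha$, so $F$ has rank one; hence $FV^{-1}$ is rank one and its spectral radius is its single nonzero diagonal entry (equivalently, its trace). Since $V$ is block-triangular along the chains $E_U\to I_U\to R$ and $E_T\to I_T\to R$, inverting it is a short computation, and $R_0$ emerges as a convex combination $R_0=(1-\alpha)R_U+\alpha R_T$ of an untracked per-case reproduction number $R_U=\beta/\gamma$ and a tracked one $R_T$ in which the tracing removals shorten the time a case spends transmitting. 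The step that produces the clean closed form is condensing the two tracing channels---rate $\theta$ in the exposed phase and rate $\psi$ in the infectious phase---into a single effective extra removal rate $\theta+\psi$ on the tracked infected, so that $R_T=\beta/(\gamma+\theta+\psi)$.

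Finally I would impose $R_0<1$ and solve the scalar inequality for $\alpha$. Clearing the denominators $\gamma$ and $\gamma+\theta+\psi$ and collecting terms, the cross terms cancel and one is left with $\beta\alpha(\theta+\psi)>(\gamma+\theta+\psi)(\beta-\gamma)$; dividing by $\beta(\theta+\psi)$ and factoring gives exactly $\alpha>\bigl(1+\tfrac{\gamma}{\theta+\psi}\bigr)\bigl(1-\tfrac{\gamma}{\beta}\bigr)$, which is C1. Two sanity checks I would record: as $\theta+\psi\to0$ the bound forces $\beta\le\gamma$ (tracing cannot help a supercritical disease on its own), and as $\theta+\psi\to\infty$ it collapses to $\alpha>1-\gamma/\beta$, the familiar ``penetration must exceed $1-1/R_0$'' requirement.

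I expect the main obstacle to be twofold. The cleaner difficulty is justifying rigorously that $R_0<1$ is \emph{equivalent} to, and not merely sufficient for, stability of the DFE; this rests on $V$ being a nonsingular M-matrix, so that the dominant eigenvalue of $F-V$ and the quantity $\rho(FV^{-1})-1$ share a sign, with a Routh--Hurwitz analysis of the degree-four characteristic polynomial as an independent cross-check. The more delicate point is the condensation giving $R_T=\beta/(\gamma+\theta+\psi)$: tracking a tracked case's history (it reaches the transmitting state $I_T$ only with probability $\epsilon/(\epsilon+\theta)$ and then transmits for mean time $1/(\gamma+\psi)$) shows the combined additive form holds under the model's simplifying treatment of the latent phase, and pinning down precisely which reduction removes the latent rate $\epsilon$ from the final condition is the part of the argument that needs the most care.
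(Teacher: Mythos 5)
Your overall strategy---linearize the four infected compartments about the disease-free equilibrium and reduce controllability to a threshold $R_0<1$---is sound, and the final algebra from $(1-\alpha)\beta/\gamma+\alpha\beta/(\gamma+\theta+\psi)<1$ to C1 is exactly right. The route is genuinely different from the paper's: the paper writes the $4\times4$ Jacobian $\mathbf{A}$ of $(E_U,I_U,E_T,I_T)$ directly and applies Descartes' rule of signs to its characteristic polynomial, showing the sign chain $\sgn(k_0)=1\Rightarrow\sgn(k_1)=1\Rightarrow\sgn(k_2)=1$ so that $k_0>0$ is the binding condition, and $k_0>0$ \emph{is} C1. Your next-generation decomposition is more structured and explains \emph{why} the condition has the convex-combination form $(1-\alpha)R_U+\alpha R_T<1$, which the paper's coefficient computation obscures. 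However, there are two genuine gaps.

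The first is the one you yourself flag as delicate, and it is fatal as stated. In the paper's model the tracing removal of tracked exposed people occurs at rate $\theta I_T$, \emph{not} at per-capita rate $\theta E_T$: the $E_T$ equation is $\dot E_T=\beta\alpha I_U+(\beta\alpha-\theta)I_T-\epsilon E_T$, because notifications are triggered by detected infectious contacts, so the removal flux is proportional to $I_T$. Your heuristic for $R_T$ (``reaches $I_T$ with probability $\epsilon/(\epsilon+\theta)$, then transmits for mean time $1/(\gamma+\psi)$'') corresponds to the per-capita form and yields $R_T=\frac{\epsilon}{\epsilon+\theta}\cdot\frac{\beta}{\gamma+\psi}$, which is not $\beta/(\gamma+\theta+\psi)$ and would leave $\epsilon$ in the final condition, contradicting C1. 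With the paper's form, the tracked block of $V$ is $\bigl(\begin{smallmatrix}\epsilon&\theta\\-\epsilon&\gamma+\psi\end{smallmatrix}\bigr)$ with determinant $\epsilon(\gamma+\theta+\psi)$, the $\epsilon$ cancels in the relevant row of $V^{-1}$, and $R_T=\beta/(\gamma+\theta+\psi)$ is exact---no condensation or reduction is needed. You cannot complete the proof without committing to this form of the $E_T$ equation.

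The second gap is the justification of the equivalence $\rho(FV^{-1})<1\Leftrightarrow s(F-V)<0$. With the correct $E_T$ equation, $V$ has the \emph{positive} off-diagonal entry $\theta$ in position $(E_T,I_T)$, so it is not an M-matrix and the standard van den Driessche--Watmough hypotheses fail (the outflow from $E_T$ does not vanish when $E_T=0$). The equivalence happens to hold here, but it must be verified directly, e.g.\ via $\det(F-V)=\det(V)\,(1-\rho(FV^{-1}))$ with $\det(V)>0$, plus an argument that at most one eigenvalue can be nonnegative---which is precisely the content of the paper's Descartes'-rule analysis of the coefficient signs. So the cross-check you relegate to a parenthetical is in fact the load-bearing part of the argument.
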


\emph{Remark.} The closed-form condition in Theorem~\ref{theo:c1} could not have been obtained with the model used by Ferretti \etal in~\cite{Ferretti2020}, which can only be solved numerically. Closed-form conditions are crucial for the understandability of models, and thus for decision makers (including digital contact tracing designers) to correctly assess the dependencies within the epidemic. Note also that the complexity of this solution is~$O(1)$, hence it does not depend on the size of the population like, e.g., for agent-based models. Thus, \emph{C1 provides an easily interpretable and fast answer to the controllability problem, albeit approximate}.

To illustrate the intuition behind condition C1 in Theorem~\ref{theo:c1}, let us consider two ideal cases separately: i) instantaneous tracing ($\theta +\psi \rightarrow \infty$) and ii) perfect app uptake ($\alpha = 1$). These correspond to the two dimensions of digital contact tracing: how good we are at detecting infections of the tracked people and how many people we are able to track. 
When tracing is instantaneous (corresponding to the first case above), the threshold on $\alpha$ (derived from Equation~\ref{eq:c1}) converges to $1-\frac{\gamma}{\beta}$. For SIR models, the ratio $\frac{\beta}{\gamma}$  corresponds to the basic reproduction number $R_0$~\cite{daley2001epidemic}, hence the threshold on $\alpha$, interestingly, is equivalent to the herd immunity threshold $1-\frac{1}{R_0}$. Note that instantaneous tracing alone is not sufficient to control the epidemic: $\alpha$ must be high enough for tracking to cover a large fraction of the population. A superfast tracking that only follows just a tiny fraction of the population is basically useless.
In the second case (perfect app uptake, i.e. $\alpha = 1$), condition C1 reduces to $\gamma + \theta +\psi > \beta$. Therefore, $\theta +\psi$ must be large enough to compensate for a high $\beta$ (effective contact rate). This means that even under the ideal situation where everyone has the app ($\alpha = 1$), control of the epidemic may not be attainable if the tracing process is slow. The efficiency of contact tracing is captured by $\theta$ and $\psi$ and, in Section~\ref{sec:parameters_from_contact_history}, we discuss how to derive them. %
\subsection{Estimating parameter $\theta$ and $\psi$ from contact history}
\label{sec:parameters_from_contact_history}

Now, we step back and discuss how to model $\theta$ and $\psi$, which are the rates at which exposed and infectious people are removed, respectively. As discussed above, they capture the effectiveness of testing. To derive them, we have to reconstruct the process from contagion (encounter with an infectious person that yields to infection) until removal.

Exposure notifications are triggered by tracked people becoming symptomatic and, therefore, being tested. We know that, since SEIR models assume homogeneity in encounters (which boils down to a single $\beta$ describing the entire contact process, with no distinction between high vs. low social interactions), the contact rate at which the newly symptomatic tracked person met with tracked susceptibles is $\alpha \beta$ (i.e., the baseline rate scaled by the fraction of tracked people). 
This rate must be split across the different states in which the previous contact might currently be in. Specifically, a past contact can be still exposed, already infectious, or removed. We neglect the removal of susceptibles because the population of susceptibles is very large (by assumption, $S \sim N$), hence removing them would not impact the epidemic. Thus, Definition~\ref{def:alertable_contacts} below follows. 

\begin{definition}[Alertable Contacts]\label{def:alertable_contacts}
The \emph{alertable} contacts of a positive person~$i$ can be a) in state $E_T$  (no symptoms, not contagious), b) in state $I_T$ (contagious, no symptoms), c) in state $R$ (symptomatic or recovered, hence already ``removed'' from the epidemic). We denote the probabilities associated with each of these conditions as $p_E$, $p_{I}$, and $p_{R}$, respectively (note that they add up to 1). 
\end{definition}

\noindent
Intuitively, health authorities should strive to increase as much as possible $p_E$, because people in the \exposed state have yet to infect someone. Of course, this might not be possible (e.g., due to testing delays) so the next best thing is to increase $p_I$. Instead, notifying people who are already in the \removed state is completely useless from the point of view of epidemic containment.  

\begin{figure}[t]
\begin{center}
\includegraphics[scale=0.45]{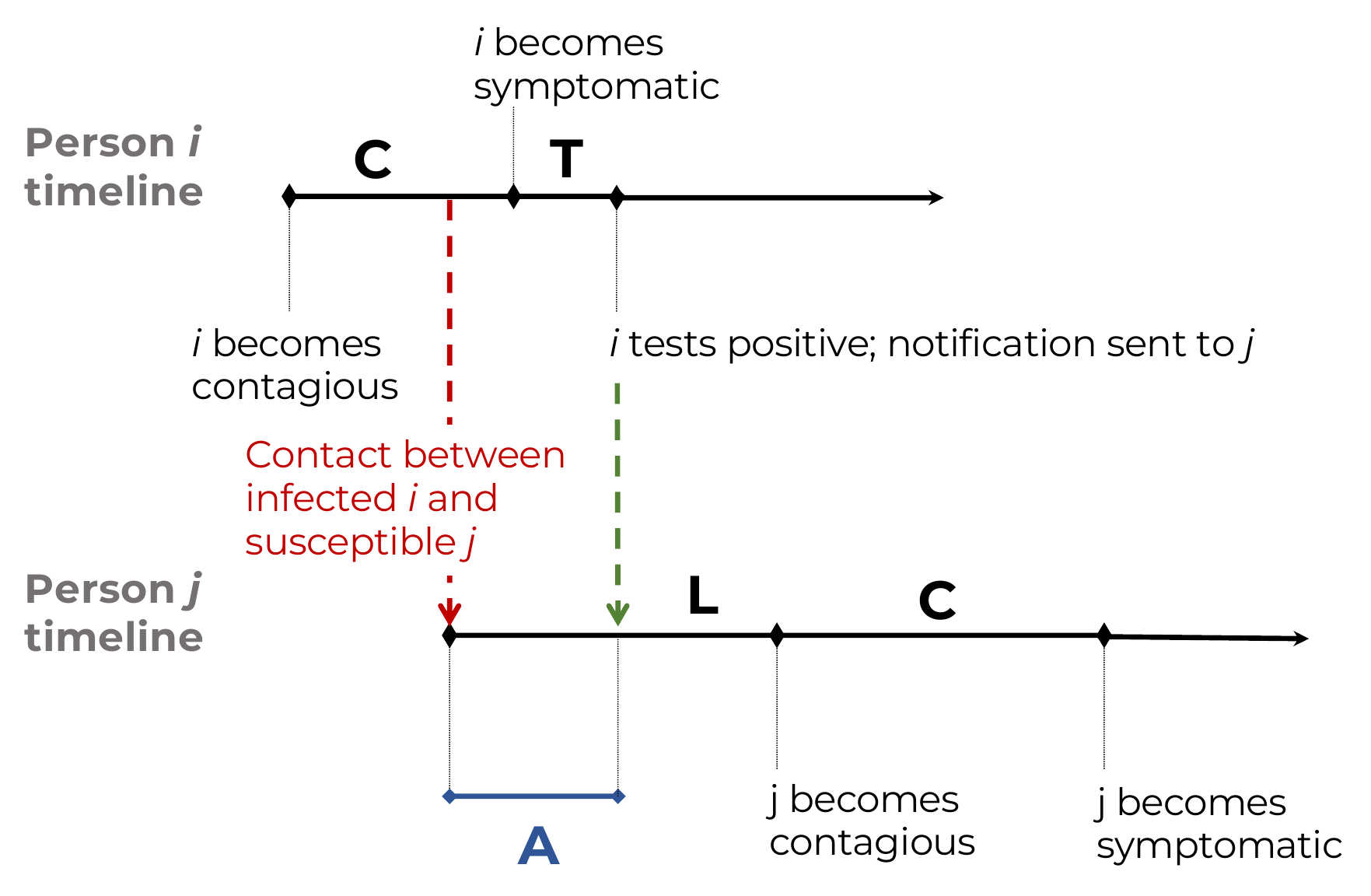}\vspace{-10pt}
\caption{The timeline of the conversion to symptomatic.}\vspace{-20pt}
\label{fig:timeline_to_symptoms}
\end{center}
\end{figure}

As illustrated in Figure~\ref{fig:timeline_to_symptoms}, we can model the conversion to symptomatic of a previous contact considering: the length of the latent period $L$ (which, as discussed in Section~\ref{sec:covid_primer}, goes from the time of infection to the time a person becomes contagious), the length of the infectious but asymptomatic period $C$, and the testing delay $T$ (the time it takes for a test result to be available after the person has developed symptoms). Note that $L$ and $C$ are only determined by the properties of the specific disease. On the contrary, $T$ is totally dependent on the efficacy of the testing system in place, hence it can be shortened by human interventions (e.g., using rapid tests rather than molecular ones or by scaling up testing facilities). The probability distribution of $L$, $C$, and $T$ can be obtained from real data, when available (at the end of the section, we will discuss an example based on a realistic duration of the latent and infectious windows). 
Using their distribution, we can characterize the only missing time interval in Figure~\ref{fig:timeline_to_symptoms}: $A$, which represented the time it takes for the app notification to pop up after contact. In Lemma~\ref{lemma:a} we derive interval $A$'s distribution.

\begin{lemma}\label{lemma:a}
The random variable $A$ describing the time interval between the at-risk contact and the time when the notification from the contact tracing app arrives is distributed as $C' + T$, i.e., as the sum (between random variables) of the residual infectious-but-asymptomatic period~$C'$ and the testing delay~$T$.
\end{lemma}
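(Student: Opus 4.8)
The plan is to work in the reference frame anchored at the moment of the at-risk contact and to decompose the interval up to the app notification into two consecutive, physically distinct pieces. First I would fix the event that starts the clock: the at-risk contact between the (eventually positive) index individual~$i$ and a tracked susceptible. The crucial observation is that such a contact can only take place while $i$ is in the \infected state, i.e., contagious but still asymptomatic, because an \exposed (non-contagious) person cannot generate an infection and a symptomatic person is assumed to have already self-isolated (joined~$R$). Consequently the contact necessarily falls inside the infectious-but-asymptomatic window~$C$ of~$i$, and the entire latent period~$L$ lies strictly before the clock is started. This is exactly why $L$ does not appear in the expression for~$A$.

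Next I would split $A$ at the single intermediate event that is observable to the testing system, namely the onset of symptoms of~$i$. Walking along the timeline of Figure~\ref{fig:timeline_to_symptoms}, the interval from the at-risk contact to $i$'s symptom onset is precisely the part of the infectious window~$C$ that remains after the contact --- the residual infectious-but-asymptomatic period, which I denote $C'$. The second interval, from symptom onset to the moment the notification is delivered, is by definition the testing delay~$T$. Since these two sub-intervals are consecutive and non-overlapping, $A = C' + T$, which is the claim.

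The step I expect to be the real obstacle is characterizing $C'$ correctly, rather than the additive decomposition itself. $C'$ is a \emph{forward recurrence (residual) time} of the infectious window, and obtaining its distribution requires specifying how the contact time is positioned within~$C$. Under the homogeneous-mixing assumption of the SEIR model, infectious contacts are generated at a constant rate (proportional to $\beta$) throughout the window, so conditioning on a notification-triggering contact selects the window~$C$ in a length-biased manner (the inspection paradox) and places the contact uniformly within the selected window. The residual $C'$ therefore follows the equilibrium/forward-recurrence distribution associated with~$C$, and care is needed to keep this sampling bias explicit when the actual distribution of~$A$ is computed downstream. Finally, I would note that $T$ is a property of the testing infrastructure and is independent of the disease-driven quantities $L$ and $C$; this independence is what lets the sum $C' + T$ be treated as a convolution in the subsequent derivations.
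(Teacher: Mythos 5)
Your proposal is correct and follows essentially the same route as the paper's own proof: both decompose $A$ into the residual infectious-but-asymptomatic period $C'$ (obtained via the random-observer/forward-recurrence argument, Equation~\ref{eq:residual}) plus the testing delay $T$, with the sum realized as a convolution. Your additional remarks on why $L$ does not enter, on the length-biased selection of the window, and on the independence of $T$ are sound elaborations of points the paper treats implicitly.
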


\begin{proof}
As illustrated in Figure~\ref{fig:timeline_to_symptoms}, $A$ describes the time at which the contact tracing app notification arrives. This time corresponds to the interval between contact with an infectious person and the notification time, hence it includes a \emph{residual} contagious period (which we denote with $C'$) and the testing delay T. Thus, $A$ is distributed as $C' + T$ (hence its PDF is given by the convolution of the PDF of $C'$ and $T$~\cite{dekking2005modern}). Mathematically, $C'$ can be obtained assuming that the contact between the susceptible and the contagious individuals appears during $C$ as a random observer: as a result, $C'$ can be derived as the residual duration~\cite{boldrini2011pareto} of $C$ for the infectious person $i$ (i.e, the time left before the person becomes symptomatic, hence they are discovered to be positive). 
Denoting by $F_X$ the CCDF of a generic random variable $X$, the formula to calculate the residual time $C'$ is the following~\cite{boldrini2011pareto}:
 \begin{equation}
 F_{C'}(t) = \frac{1}{E[C]} \int_t^{\infty} F_C(u) du.
 \label{eq:residual}
 \end{equation}
 Since $C$ can be derived from real epidemic data, $C'$ can also be calculated.
 \end{proof}

Now that $A$ is fully characterized, by deriving its interplay with $L$ and $C$ we obtain $p_E$, $p_I$, and $p_R$ in Lemma~\ref{lemma:probs} below. 

\begin{lemma}\label{lemma:probs}
The probabilities $p_E$, $p_I$, and $p_R$ (associated, respectively, with catching a person in state $E_T$, $I_T$, and $R$) are given by the following:
  \begin{eqnarray*}
 p_E &=&  P(W < 0) \\
 p_I  &=& \int_0^{\infty} P(W=w) P(C > w) \textrm{d}w \\
 p_R &=& P(W-C > 0),
 \end{eqnarray*}
where we denote by W the difference $A-L$.
\end{lemma}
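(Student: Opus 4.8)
The plan is to track the notified contact along its own disease timeline, measured from the instant of the at-risk encounter with the index case~$i$, and to read off in which of the three states of Definition~\ref{def:alertable_contacts} the contact sits when the app notification fires. Conditioning on the encounter having actually produced an infection (the only situation in which the contact can be in $E_T$, $I_T$, or $R$), I would place the contact's infection at time~$0$. The contact is then exposed ($E_T$) on the interval $[0,L)$, becomes contagious and enters $I_T$ on $[L,L+C)$, and develops symptoms and is removed to $R$ from $L+C$ onward, where $L$ and $C$ are the latent and the contagious-but-asymptomatic windows of the contact. By Lemma~\ref{lemma:a}, the notification arrives at time~$A$ after the encounter, so the state at notification is decided entirely by where $A$ falls among these three intervals.

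With the substitution $W=A-L$, the three cases translate into inequalities. The contact is still exposed precisely when the notification precedes the end of the latent window, i.e.\ $A<L$, equivalently $W<0$ (note $A\ge 0$, so the lower end is automatic); this yields $p_E=P(W<0)$ at once. Symmetrically, the contact has already been removed when the notification arrives after symptom onset, i.e.\ $A\ge L+C$, equivalently $A-L\ge C$, that is $W-C\ge 0$; since all variables are continuous the boundary carries no mass and $p_R=P(W-C>0)$. The remaining mass lands on the intermediate event $L\le A<L+C$, i.e.\ $0\le W<C$, which is exactly the $I_T$ case, so that the three probabilities sum to one as required by Definition~\ref{def:alertable_contacts}.

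For $p_I=P(0\le W<C)$ I would condition on the value of $W$ and integrate. Treating $W$ and the contact's own contagious period $C$ as independent, conditioning on $W=w$ with $w\ge 0$ leaves the single requirement $C>w$, whose probability is the CCDF of $C$ at $w$; integrating the density of $W$ against this CCDF over $w\ge 0$ gives
\[
p_I=\int_0^{\infty} P(W=w)\,P(C>w)\,\textrm{d}w,
\]
matching the claimed expression, with the restriction to $w\ge 0$ automatic because for $w<0$ the event $0\le W$ already fails.

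The step I expect to require the most care is the independence assumption underpinning this integral, and more generally the independence used to form $W$. Recall from Lemma~\ref{lemma:a} that $A=C'+T$, where $C'$ is the \emph{residual} contagious period of the index case~$i$ obtained via Equation~\ref{eq:residual}; the $C$ appearing in $p_I$, by contrast, is the \emph{full} contagious period of the newly infected contact. These belong to two distinct individuals, so I would argue they are independent draws from the disease-level distribution of the contagious window, and likewise that the testing delay~$T$ and the contact's latent period~$L$ are independent of one another and of $C$. Under these assumptions $W=C'+T-L$ is independent of $C$ and the conditioning that produces the integral is legitimate. I would also state explicitly that the entire computation is conditional on infection occurring at the encounter, which is precisely what makes $\{p_E,p_I,p_R\}$ a genuine probability distribution over the three alertable states.
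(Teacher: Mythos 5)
Your proposal is correct and follows essentially the same route as the paper: identify the three events $A<L$, $L\le A<L+C$, and $A\ge L+C$, rewrite them via $W=A-L$, and obtain the integral for $p_I$ by conditioning on $W=w$. The paper compresses the final computation into ``operatively, this results in the thesis,'' whereas you usefully make explicit the independence of $W$ and the contact's own $C$ (they pertain to different individuals) that justifies that step.
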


\begin{proof}
From Figure~\ref{fig:timeline_to_symptoms}, we can see that $p_E$ is equivalent to the probability of the notification arriving within the latent period (corresponding to $P(A < L)$). The probability that the notification arrives during the contagious and asymptomatic period ($P(L < A < L + C)$) yields $p_I$. The value of $p_R$ can then be obtained complementing to 1 (or computing $P(A > L + C)$, i.e., the probability that the notification arrives when the individual is already symptomatic). 
Operatively, this results in the thesis.
\end{proof}
 
Not for all distributions the above algebra of random variables yields closed-form solutions, but for some significant ones, it does, at least approximately. This happens, e.g., in the Normal case discussed in the next section (Sec.~\ref{sec:normal_example}). Closed-form solutions can also be obtained with exponential random variables.
 Once the probabilities $p_E$ and $p_{I}$ are derived, it is straightforward to obtain rates $\theta$ and $\psi$.
 
\begin{theorem}\label{theo:psi_theta}
The rates at which exposed and contagious people are removed ($\theta$ and $\psi$, respectively) are given by the following:
 \begin{eqnarray}
 \theta &=& p_E \alpha \beta, \label{eq:theta}\\
 \psi    &=& p_I \alpha \beta, \label{eq:psi}
 \end{eqnarray}
 where $p_E$ and $p_I$ are obtained as in Lemma~\ref{lemma:probs}.
\end{theorem}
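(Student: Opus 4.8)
The plan is to read $\theta$ and $\psi$ as the per-capita removal rates induced by the contact-tracing notification process, and to obtain them by decomposing the total notification rate according to the state-occupancy probabilities established in Lemma~\ref{lemma:probs}. The starting point is the fact already set up in the preamble to the theorem: a newly symptomatic (hence tested and positive) tracked individual has, during its contagious window, generated infection-bearing encounters with tracked susceptibles at rate $\alpha\beta$. Here the baseline effective contact rate $\beta$ already folds in the probability that an encounter produces an infection, while the factor $\alpha$ restricts attention to the tracked fraction of the population, i.e.\ the only contacts that can actually receive an app notification.

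First I would pin down the mechanism driving the red arcs of Figure~\ref{fig:SEIR_model}: an individual sitting in $E_T$ or $I_T$ is removed exactly when a notification reaches it, and each such notification originates from one of the rate-$\alpha\beta$ contacts above. I would then argue, invoking the homogeneous-mixing (mean-field) assumption underpinning the whole SEIR construction, that the rate at which a tagged tracked individual receives such a notification coincides with the rate $\alpha\beta$ at which the corresponding source produces tracked contacts; homogeneity is precisely what licenses the passage from the ``per source'' viewpoint of the preamble to the ``per notified individual'' viewpoint required for a compartmental transition rate.

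Next I would split this rate $\alpha\beta$ across the three mutually exclusive states in which a contact can be found at the instant the notification fires. By Definition~\ref{def:alertable_contacts} and Lemma~\ref{lemma:probs}, a contact lies in $E_T$, $I_T$, or $R$ with probabilities $p_E$, $p_I$, $p_R$, and these are exactly the outcome of the interplay between the notification delay $A$ (characterized in Lemma~\ref{lemma:a}) and the latent and contagious windows $L$ and $C$. Thinning the notification stream of rate $\alpha\beta$ by these probabilities leaves a rate $p_E\,\alpha\beta$ of notifications landing on still-exposed contacts and a rate $p_I\,\alpha\beta$ landing on already-contagious contacts. Identifying these thinned rates with the $E_T\to R$ and $I_T\to R$ arcs yields $\theta = p_E\,\alpha\beta$ and $\psi = p_I\,\alpha\beta$, while the $p_R$ component is simply discarded, since it would remove individuals already in $R$ and hence carries no flow in the model.

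The step I expect to be the main obstacle is the second one: making the identification between the source-side contact rate $\alpha\beta$ and the per-capita removal rate of a compartment rigorous rather than merely plausible. This leans essentially on the full-mixing symmetry, so that the expected number of pending notifications per tracked infected individual, combined with the memorylessness implicit in treating $\theta$ and $\psi$ as \emph{constant} rates, reproduces exactly linear $E_T\to R$ and $I_T\to R$ outflows. I would make this explicit with a flow-balance count---equating $\theta\,|E_T|$ with the rate of notification events currently terminating on members of $E_T$---and checking that, under homogeneity and the assumption $S\sim N$, the population sizes cancel, leaving precisely the stated constant rates.
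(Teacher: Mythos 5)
Your proposal is correct and follows essentially the same route as the paper, whose proof is a one-liner: scale the tracked contact rate $\alpha\beta$ by the probabilities $p_E$ and $p_I$ of catching the contact while still exposed or already infectious. The additional care you take in justifying the passage from the source-side contact rate to a per-capita compartmental removal rate under homogeneous mixing is a more explicit version of the same argument, not a different one.
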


\begin{proof}
The thesis simply follows from scaling the overall tracked contact rate $\alpha \beta$ by the probability that the exposed person is notified when still in the \exposed state or in the \emph{infectious} state.
\end{proof}

\begin{table}[t]
\caption{Summary of notation.}
\label{tab:notation}
\begin{tabular}{@{}ll@{}}
\toprule
 \sc{Notation} & \sc{Description} \\ \midrule
     	$N$ & population size \\
	$S$ & susceptible population \\
	$E$ & people that have the disease but not yet symptoms\\
	$I$ & contagious \& symptomatic people \\
	$R$ & removed people (either recovered or isolated) \\
	$( )_T$ & subscript T denotes compartments that are tracked\\
	$()_U$ & subscript U denotes compartments that are \emph{not} tracked \\
        $\beta$ & effective contact rate  \\
        $\epsilon^{-1}$ & time from infected to contagious \\
        $\gamma^{-1}$ & time from contagious to removed (recovery/isolation)\\
        $\theta$ & rate at which tracked exposed are isolated \\
        $\psi$ & rate at which tracked infectious are isolated \\
        $\alpha$ & fraction of population with the app installed and running \\
	$L$ &  latent period \\
	$C$ & infectious-but-asymptomatic period \\
	$T$ &  testing delay \\
	$A$ &  time between at-risk contact and app notification \\
	$W$ &  time to contagious after app notification\\
	$p_E$ & probability of alerting a person in state $E_T$\\
	$p_I$ & probability of alerting a person in state $I_T$\\
	$p_R$ & probability of alerting a person in state $R$\\
\bottomrule
\end{tabular}
\end{table}

\subsubsection{Example with normally distributed characteristic times}
\label{sec:normal_example}

For the sake of example, we can now obtain $p_E$, $p_{I}$, and $p_R$ leveraging the typical average duration of the latent and contagious periods for the original COVID-19 epidemic. From~\cite{Bar-On2020}, we obtain the average duration of the latent period ($\mathbb{E}[L] = \epsilon^{-1}$ = 3 days) and that of the period before an infected becomes contagious ($\mathbb{E}[C] = \gamma^{-1}$ = 2 days). Note that the expectations of $L$ and $C$ correspond to the inverse of $\epsilon$ and $\gamma$ in the SEIR model of Section~\ref{sec:seir}. For example, assume that $L$ and $C$ are normally distributed, each with a standard deviation 0.5 (the occurrence of negative values with this configuration is negligible). We also assume that $T$ is normally distributed, with rate $\mu_T$ and standard deviation~$\sigma_T$. 
%
It is easy to verify that $A = C' + T$ can be approximated as normally distributed as well, specifically $A \sim \mathcal{N}(\mathbb{E}[C'] + \mu_T, \textrm{Var}(C') + \sigma_T^2)$. Since we are dealing with normally distributed variables, it is easy to obtain their difference and sum using the algebra of normally distributed random variables. 

Leveraging the formulas we have obtained, we can now better understand the impact of testing delays on the ability to intercept infected people in each stage using contact tracing. In the following, we focus on a tagged pair of people (one tracked infectious $i$ and one tracked susceptible $j$ infected by $i$, analogously to Figure~\ref{fig:pe_pi_pr}) and we study the probability that~$j$ is notified when in the \exposed, \infected, or \removed state, respectively, as we vary the testing delay. Note that, since we focus on a tagged pair of tracked people, this result does not depend on $\alpha$, which is a population-level parameter. As Figure~\ref{fig:pe_pi_pr} shows, as long as the test turnaround is less than 2 days, the infected person is most likely caught while they are still not contagious. Vice versa, beyond a 4-day turnaround, we basically intercept only people that are already contagious or even symptomatic. As discussed above, the earlier we intercept infected people, the better. Small testing delays are thus a key ingredient of a containment plan.

\begin{figure}[!t]
\begin{center}
\includegraphics[scale=0.6]{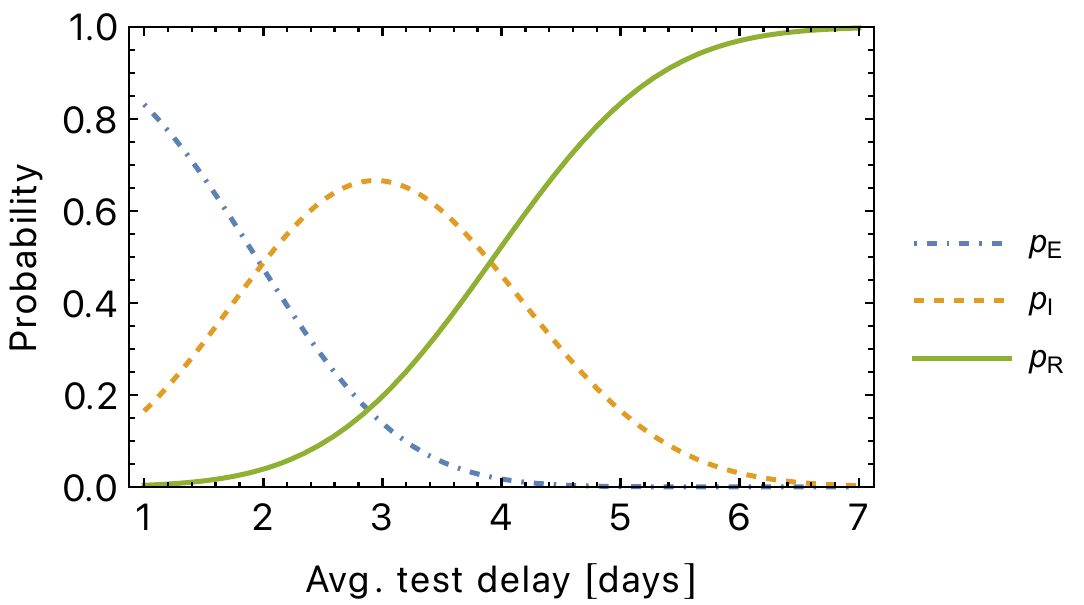}\vspace{-10pt}
\caption{$p_E$, $p_I$, and $p_R$ as the average testing delay increases.} \vspace{-20pt}
\label{fig:pe_pi_pr}
\end{center}
\end{figure}

\section{Can digital contact tracing tame an epidemic?}
\label{sec:evaluation}

Now we use the model defined in the previous section and apply it to study the effectiveness of digital contact tracing in controlling the COVID-19 epidemic. To this end, we need COVID-19-specific estimates for $\beta$ (effective contact rate), $\gamma$ (transition rate to symptomatic), and $\epsilon$ (transition rate to contagious), which are the baseline (i.e., not depending on the contact tracing process) transition rates in Figure~\ref{fig:SEIR_model}. Then, $\theta$ and $\psi$ (which instead are also determined by the contact tracing and the testing process) can be derived as discussed in Section~\ref{sec:parameters_from_contact_history}. 
Note that the digital contact tracing process itself can be described with two parameters: $\alpha$, the app uptake, and the test delay $\mu_T$. In the following, we will assess their impact on the control condition C1 (Theorem~\ref{theo:c1}) for different epidemic scenarios. Specifically, we consider three cases.  First, we test the effect of an increasing latent period. Then, we study the effect of a longer infectious period with constant transmissibility (see below for details). Finally, we consider what happens when transmissibility increases.

Regarding the baseline epidemic parameters, generally $\gamma$ and the ratio $\frac{\beta}{\gamma}$, corresponding to the basic reproduction number $R_0$, are estimated early in an epidemic. 
Hence, in the following, $\beta$ will be set to the value yielding the COVID-19 $R_0$ for the chosen $\gamma$. Intuitively, the basic reproduction number $R_0$ captures the transmissibility of a disease, i.e., the average number of cases directly generated by one infectious person in a population with a very large number of susceptibles. As we also see in the following, the larger $R_0$, the more difficult it is to contain the epidemic with digital contact tracing (and in general).


We start (Figure~\ref{fig:control_varying_epsilon}) with a scenario with an average contagious but asymptomatic window length ($\gamma^{-1}$) equal to 2 days (typical of COVID-19) and fixed $R_0 = 2$ (this value correspond to the initial 2020 estimate for COVID-19 in~\cite{Ferretti2020}). We test the effect of an increasing latent period length ($\epsilon^{-1} \in \{1, 3, 5\}$ days) on the controllability. Specifically, we plot condition C1 in Equation~\ref{eq:c1} as a function of app uptake $\alpha$ and testing delay (the two directions along which digital contact tracing can be improved): C1 holds true in the shadowed areas of the plot (therefore, the epidemic is under control). Intuitively, the longer the latent period, the easier is the control of the epidemic because we have more time to intercept tracked people before they become contagious. Figure~\ref{fig:control_varying_epsilon} confirms this: as $\epsilon^{-1}$ (the latent window) increases, the importance of small testing delays is reduced, and an app uptake above 80\% may be sufficient to control the outbreak. Looking again at Figure~\ref{fig:control_varying_epsilon}, we also note that an increase in $\epsilon^{-1}$ induces a temporal shift on the controllability boundary (the curves delimiting the shadowed areas in the figure): in other words, the controllability boundary is affected by a shift along the x-axis as the latent window increases.

Next, we consider the effect of the infectious period when the transmissibility is kept constant. Without contact tracing, two epidemics with the same transmissibility evolve in the same way (since the controllability condition reduces to $
R_0 < 1$). However, the length of the infectious period affects the effectiveness of contact tracing. Therefore, the digital controllability of epidemics with the same $R_0$ is different if their infectious periods are different. 
In Figure~\ref{fig:control_varying_gamma}, we set the latent window $\epsilon^{-1}$ at 3 days (its average value for COVID-19) and vary $\gamma$ (the duration of the infectious period) in $\{2, 5, 8\}$ days. Note that we want $R_0$ to remain fixed (no change in transmissibility), so we vary $\beta$ accordingly. This implies that each person, on average, infects the same number of people in all three cases captured by Figure~\ref{fig:control_varying_gamma}. 
We observe that the longer the contagious period, the easier the containment (since the shadowed area expands). 
This was expected because a longer contagious period increases the probability of catching infected people before they become symptomatic.
With respect to the controllability boundary (the curves delimiting the shadowed areas in the figure), a change in $\gamma$ induces a change in the convexity of the boundary but no horizontal shift. 

Finally, in Figure~\ref{fig:control_varying_beta} we fix the latent period $\epsilon^{-1}$ and the contagious period $\gamma^{-1}$ to their typical COVID-19 values of 3 and 2 days, respectively, and we change the $R_0$ of the epidemic by varying $\beta$. Note that this analysis is especially important, given the rise of new variants with increased transmissibility (and therefore greater $R_0$). We study $R_0 \in \{ 2, 3, 4, 5, 6\}$. $R_0=2$ is the initial estimate for COVID-19 (original strain), then revised to be much higher in some areas (e.g., the estimate in~\cite{Casella2020} is $R_0 \sim 4$). The Alpha variant (B.1.1.7 lineage) is estimated to feature at least 40\% higher R~\cite{Davies2021} with respect to the original strain, while the Delta variant (B.1.617.2) has transmissibility estimated between 6 and 7~\cite{Pung2021,Dagpunar2021}. Note that the apparent even higher transmissibility of the recent Omicron variant (B.1.1.529) seems to be due to immune evasion (e.g., vaccines not as effective as for previous variants) rather than to an actual increase in basic transmissibility~\cite{Lyngse2021.12.27.21268278}.
Figure~\ref{fig:control_varying_beta} shows that, as expected, the impact of increasing $R_0$ is much more disruptive than that of different latent/contagious windows. Specifically, even with instantaneous testing, the minimum uptake $\alpha$ needed to control the epidemic increases as $R_0$ increases. This means that even a small fraction of untracked people can wreak havoc on the containment measures.
In practice, however, with $R_0 = 4$, the control of the epidemic through digital contact tracing becomes impossible: an uptake above 95\% is unrealistic for all the reasons discussed in Section~\ref{sec:intro} (e.g., technical problems with old smartphones, distrust by a fraction of the population). In this case, $R_0$ must also be reduced by exploiting mitigation measures (social distancing, masks), in order to reduce the probability of infection upon contact, hence $\beta$.

\begin{figure}[!t]
\begin{center}
\includegraphics[scale=0.7]{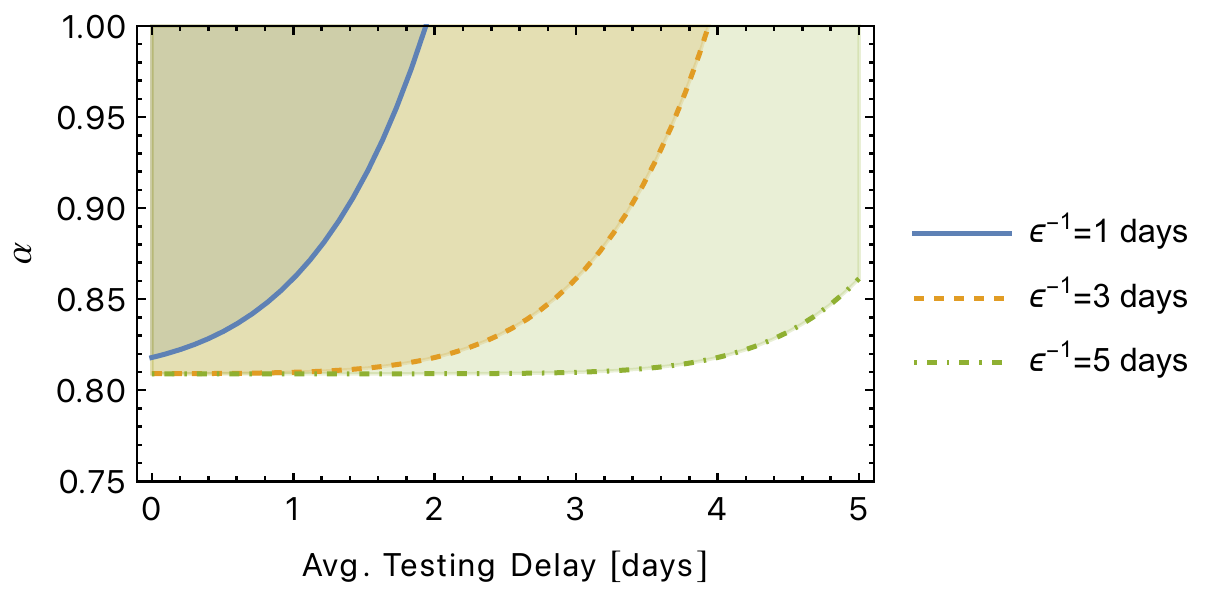}
\caption{Controllability when the average latent period increases (corresponding to the decrease of $\epsilon$). We consider $\epsilon^{-1} = \{1, 3, 5\}$ days. Control is attained in the shadowed regions.} \vspace{-10pt}
\label{fig:control_varying_epsilon}
\end{center}
\end{figure}

\begin{figure}[!t]
\begin{center}
\includegraphics[scale=0.7]{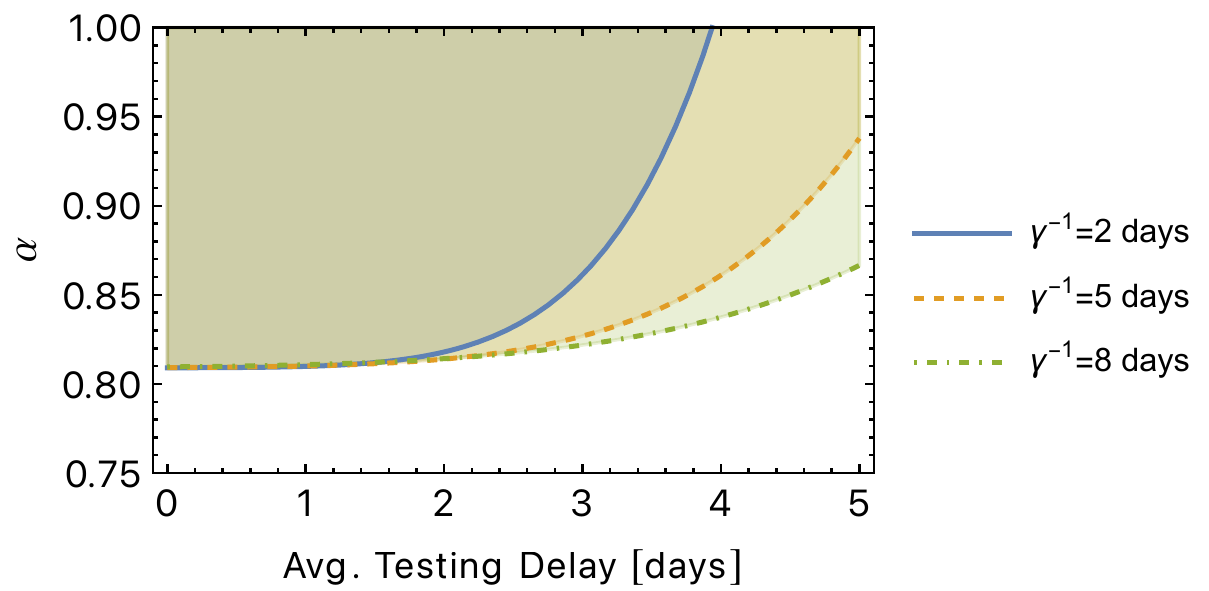}
\caption{Controllability when the average length of the contagious period increases (corresponding to $\gamma$ decreasing). We consider $\gamma^{-1} = \{2, 5, 8\}$ days. Control is attained in the shadowed regions.}\vspace{-10pt}
\label{fig:control_varying_gamma}
\end{center}
\end{figure}

\begin{figure}[!t]
\begin{center}
\includegraphics[scale=0.7]{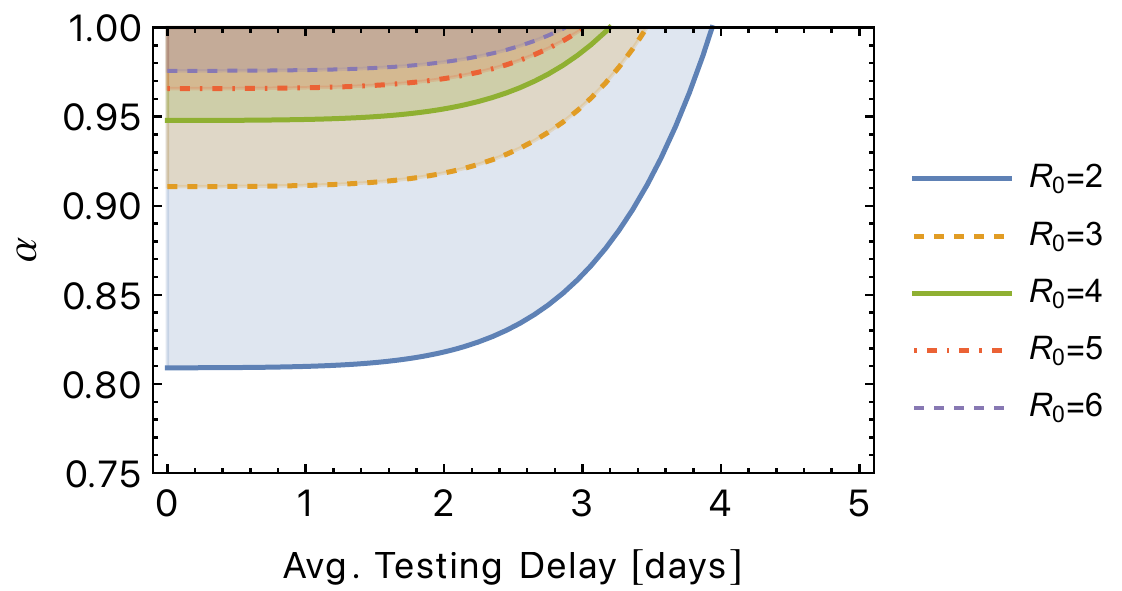}
\caption{Controllability when the $R_0 = \frac{\beta}{\gamma}$ increases (we fix $\gamma$ and we increase $\beta$). We study $R_0 \in \{ 2, 3, 4, 5, 6\}$. Control is attained in the shadowed regions.}\vspace{-20pt}
\label{fig:control_varying_beta}
\end{center}
\end{figure}

\section{Conclusion}
\label{sec:conclusion}

In this work, we have discussed the modeling efforts for COVID-19 and we have proposed a SEIR model that factors in digital contact tracing and is capable of producing a closed-form condition on the controllability of the epidemic. Leveraging this model, we have studied how the penetration of digital contact tracing apps within the population impacts the control of the epidemic. We have found that the penetration must be in general high, hence digital contact tracing may not be sufficient to contain an epidemic, even with a fast turnaround of tests. Additional mitigation strategies must be implemented, such as social distancing and mask wearing. Additionally, the impact of digital contact tracing is highest when the testing delay is low. If the test turnaround is greater than 4 days, digital contact tracing has zero impact on containment. In future work, we plan to extend the model to account for more variability among nodes, e.g., in terms of contact patterns, (along the lines of~\cite{Newman2002}), while still striving for closed-form controllability conditions.

\appendices

\section{Proof of Theorem~1}
\label{app:proof_c1}

\begin{proof}
We start by writing the ODE system corresponding to Figure~\ref{fig:SEIR_model}:
\begin{align}\label{eq:seir_system}
\frac{dS_U}{dt} &= - \beta (1-\alpha)(I_U + I_T) \nonumber\\
\frac{dE_U}{dt} &= \beta (1-\alpha)(I_U + I_T)- \epsilon E_U \nonumber\\
\frac{dI_U}{dt} &=  \epsilon E_U - \gamma I_U  \nonumber\\
\frac{dS_T}{dt} &= - \beta \alpha (I_U + I_T) \nonumber\\
\frac{dE_T}{dt} &= \beta \alpha I_U + (\beta \alpha - \theta) I_T - \epsilon E_T \nonumber\\
\frac{dI_T}{dt} &=  \epsilon E_T - (\gamma + \psi) I_T  \nonumber\\
\frac{dR}{dt} &= \gamma I_U + (\gamma+\theta + \psi) I_T .
\end{align}
The corresponding system of ODE can be rewritten in matrix form as $\mathbf{y'} = \mathbf{A} \mathbf{y}$, where $\mathbf{y} = [E_U, I_U, E_T, I_T]^T$ and $\mathbf{A}$ is given by the following:
\begin{align}
\mathbf{A} = \begin{bmatrix}
-\epsilon & \beta (1-\alpha)  & 0      & \beta (1-\alpha)\\
\epsilon & -\gamma & 0 & 0\\
0 & \beta \alpha & -\epsilon & \beta \alpha -\theta \\
0 & 0 & \epsilon & -\gamma - \psi
\end{bmatrix}.
\end{align}
The system in~\eqref{eq:seir_system} describes a dynamic system. Its stability (corresponding to the epidemic being under control or not) is assessed by studying its eigenvalues (see, e.g., \cite{Casella2020}), which correspond to the roots of the characteristic polynomial $p_A(x)$ of matrix A. 
In fact, since the solutions to a system of linear ODE $\mathbf{y'} = \mathbf{A} \mathbf{y}$ are of the form $y(t) = \sum_i c_i *e^{r_i t}$~\cite{braun1993differential} (where $c_i$'s are constants and $r_i$'s the eigenvalues/roots), it is clear that a positive root introduces instability into the system, because there would be an exponential function with a positive argument, hence an exponential growth in the epidemic. 
Therefore, we can study the roots of the characteristic polynomial $p_A(x)$ to assess under which conditions only negative roots exist. 
In order to avoid a trivial case, we assume $\beta > \gamma$ (i.e., the epidemic is not under control without contact tracing).
Using Descartes' rule of signs, we can derive the number of positive and negative roots of $p_A(x)$ without actually having to solve the polynomial (finding a closed form for the roots would not be feasible in this case). Starting with positive roots, we observe the following signs:
\begin{equation}\label{eq:coef_list}
\{ +, +, \sgn(k_2), \sgn(k_1), \sgn(k_0)\},
\end{equation}
where we have expressed $p_A(x)$ as $\sum_i k_i x^i$, $\sgn$ is the sign function (where $\sgn(\cdot)=1$ corresponds to sign $+$, $\sgn(\cdot)=-1$ corresponds to $-$), and $k_2, k_1, k_0$ are given by the following:
%
\begin{align}
k_2 &=& -\beta \epsilon + \epsilon^2 + \gamma (\gamma + \psi) + \epsilon (4 \gamma + 2 \psi + \theta), \nonumber{}\\
k_1 &=&\epsilon (2 \gamma + \psi + \theta) + \qquad \qquad \qquad \qquad \qquad \nonumber{} \\
& &  + \gamma [2 (\gamma + \psi) + \theta) -  \beta (\epsilon + \gamma + \psi - \psi \alpha ], \nonumber{}\\
k_0 &=& - [ (\beta - \gamma) (\gamma + \psi + \theta) ] + \beta (\psi + \theta) \alpha .
\label{eq:k_sign}
\end{align}
%
By studying the functions $\sgn(k_2), \sgn(k_1), \sgn(k_0)$, we obtain the following relationships between the coefficients' signs:
 \begin{equation}
 \sgn(k_0) = 1 \Rightarrow \sgn(k_1) = 1 \Rightarrow \sgn(k_2) = 1.
 \label{eq:conditions_chain}
 \end{equation}
In other words, since the rates must be all positive and $\alpha \in [0,1]$, when coefficient $k_0$ is positive, $k_2$ and $k_1$ must also be positive. This implies that not all possible sign permutations in Equation~\ref{eq:coef_list} are attainable, as illustrated in Table~\ref{tab:signs}. Discarding unattainable permutations, we can have at most one sign change across the coefficients of the polynomial, which implies at most one positive root. It thus follows that the condition under which we observe no sign changes is also the condition under which the epidemic can be controlled (zero positive roots). Thanks to Equation~\ref{eq:conditions_chain}, we know that $\sgn(k_0) = 1$ is a sufficient condition for this to happen. Then, solving for $k_0 > 0$ (with $k_0$ defined Equation~\ref{eq:k_sign}), we obtain condition C1 in Equation~\ref{eq:c1}.

To conclude the proof, we only need to verify that there are no complex roots. This is easy to do by applying again Descartes' rule, this time to $p_A(-x)$. To this aim, we need to change the coefficient sign of odd-power terms (i.e., $k3$, $k_1$) in Table~\ref{tab:signs} and count the sign changes. By summing the sign changes for positive and negative roots corresponding to the same permutation (equivalently, by summing the sign changes per corresponding row in Table~\ref{tab:signs} and Table~\ref{tab:signs} with the sign of odd-power terms changed), we obtain the total number of real roots. If we do the math, we discover that the total sign changes are at most $4$, hence  $p_A(x)$ features four real roots. Then, the number of complex roots is given by the difference between the degree of the polynomial and the maximum number of real roots. Since $p_A(x)$ is a polynomial of degree $4$, we know that there are no complex roots.
\begin{table}[t]
\caption{All possible sign permutations for the coefficients of the characteristic polynomial of $A$. The shadowed areas correspond to unfeasible permutations.} 
\label{tab:signs}
\centering
\begin{tabular}{*{6}{>{\raggedright\arraybackslash}p{0.11\columnwidth}}}
\toprule
$k_4$ & $k_3$ & $k_2$ & $k_1$ & $k_0$ & sign changes  \\ 
\cmidrule(lr){1-2}  \cmidrule(lr){3-5}  \cmidrule(lr){6-6}
+ & +  & + & + &  + & 0 \\
+ & +  & + & + & - &  1\\
\rowcolor[HTML]{DAE8FC} 
+ & +  & + &  - &  + &  2\\
+ & +  & + &  - &  - &  1\\
\rowcolor[HTML]{DAE8FC} 
+ & +  &  - &  + &  + &  2\\
\rowcolor[HTML]{DAE8FC} 
+ & +  &  - &  + &  - &  3\\
\rowcolor[HTML]{DAE8FC} 
+ & +  &  - &  - &  + &  2\\
+ & +  &  - &  - &  - & 1 \\ \bottomrule
\end{tabular}
\end{table}
%
\end{proof}
\ifCLASSOPTIONcaptionsoff
  \newpage
\fi



\bibliographystyle{IEEEtran}
\bibliography{covid19.bib}
%






\begin{IEEEbiography}[{\includegraphics[width=1in,height=1.25in,clip,keepaspectratio]{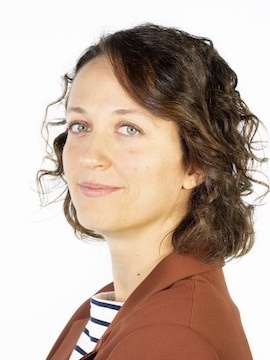}}]{Chiara Boldrini}
is a Researcher at IIT-CNR. Her research interests are in computational social sciences, decentralized AI, urban data science, mobile and ubiquitous systems. She has published 50+ papers on these topics. She is the IIT-CNR co-PI for H2020 SoBigData++ and HumaneE-AI-Net projects, and was involved in several EC projects since FP7. She is in the Editorial Board of Elsevier Pervasive and Mobile Computing and of Elsevier Computer Communications. She was the lead guest editor for the PMC Special Issue on IoT for Fighting COVID-19. She has served as TPC chair of IEEE SmartComp'22, as TPC vice-chair of IEEE PerCom'21 and, over the years, has been in the organizing committee of  several IEEE and ACM conferences/workshops, including IEEE PerCom and ACM MobiHoc.
\end{IEEEbiography}

\vfill

\begin{IEEEbiography}[{\includegraphics[width=1in,height=1.25in,clip,keepaspectratio]{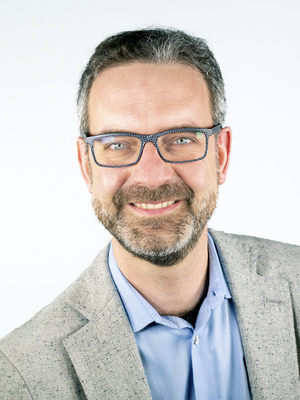}}]{Andrea Passarella} 
Andrea Passarella (Ph.D 2005) is a Research Director at the Institute for Informatics and Telematics (IIT) of the National Research Council of Italy (CNR). Prior to join IIT he was with the Computer Laboratory of the University of Cambridge, UK. He has published 200+ papers on Online and Mobile social networks, decentralised AI, Next Generation Internet, opportunistic, ad hoc and sensor networks, receiving the best paper award at IFIP Networking 2011 and IEEE WoWMoM 2013. He served as General Chair for IEEE PerCom 2022. He is the founding Associate Editor-in-Chief of Elsevier Online Social Networks. He is co-author of the book "Online Social Networks: Human Cognitive Constraints in Facebook and Twitter Personal Graphs" (Elsevier, 2015), and was Guest Co-Editor of several special sections in ACM and Elsevier Journals.  He is the PI of the EU CHIST-ERA SAI (Social Explainable AI) project.
\end{IEEEbiography}

\begin{IEEEbiography}[{\includegraphics[width=1in,height=1.25in,clip,keepaspectratio]{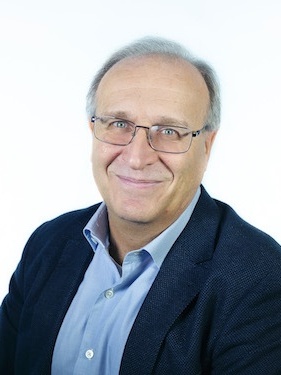}}]{Marco Conti}
is a CNR research director and, currently, he is the director of IIT-CNR institute. He has published more than 400 scientific articles related to design, modeling, and experimentation of computer and communication networks, pervasive systems, and online social networks. He is the founding EiC of Online Social Networks and Media, EiC for Special Issues of Pervasive and Mobile Computing, and, from 2009 to 2018, EiC of Computer Communications. He has received several awards, including the Best Paper Award at IFIP Networking 2011, IEEE ISCC 2012, and IEEE WoWMoM 2013. He was included in the "2017 Highly Cited Researchers" list compiled by Web of Science for the most cited articles in Computer Science. He served as the General/Program chair of several major conferences, including IFIP Networking 2002, IEEE WoWMoM 2005 and 2006, IEEE PerCom 2006 and 2010, ACM MobiHoc 2006, IEEE MASS 2007 and IEEE SmartComp 2021.
\end{IEEEbiography}

\vfill





 \enlargethispage{-2in}


\end{document}